\chardef\@x10\chardef\@xv60
\def\tcitime{
\def\@time{%
  \@minute\time\@hour\@minute\divide\@hour\@xv
  \ifnum\@hour<\@x 0\fi\the\@hour:%
  \multiply\@hour\@xv\advance\@minute-\@hour
  \ifnum\@minute<\@x 0\fi\the\@minute
  }}%
\def\QCTOpt[#1]#2{%
  \def\QCTOptB{#1}
  \def\QCTOptA{#2}
}
\def\QCTNOpt#1{%
  \def\QCTOptA{#1}
  \let\QCTOptB\empty
}
\def\Qct{%
  \@ifnextchar[{%
    \QCTOpt}{\QCTNOpt}
}
\def\QCBOpt[#1]#2{%
  \def\QCBOptB{#1}
  \def\QCBOptA{#2}
}
\def\QCBNOpt#1{%
  \def\QCBOptA{#1}
  \let\QCBOptB\empty
}
\def\Qcb{%
  \@ifnextchar[{%
    \QCBOpt}{\QCBNOpt}
}
\def\PrepCapArgs{%
  \ifx\QCBOptA\empty
    \ifx\QCTOptA\empty
      {}%
    \else
      \ifx\QCTOptB\empty
        {\QCTOptA}%
      \else
        [\QCTOptB]{\QCTOptA}%
      \fi
    \fi
  \else
    \ifx\QCBOptA\empty
      {}%
    \else
      \ifx\QCBOptB\empty
        {\QCBOptA}%
      \else
        [\QCBOptB]{\QCBOptA}%
      \fi
    \fi
  \fi
}
\def\GRAPHICSPS#1{%
 \ifcase\GRAPHICSTYPE
   \special{ps: #1}%
 \or
   \special{language "PS", include "#1"}%
 \fi
}%
\def\graffile#1#2#3#4{%
    \leavevmode
    \raise -#4 \BOXTHEFRAME{%
        \hbox to #2{\raise #3\hbox to #2{\null #1\hfil}}}%
}%
\def\draftbox#1#2#3#4{%
 \leavevmode\raise -#4 \hbox{%
  \frame{\rlap{\protect\tiny #1}\hbox to #2%
   {\vrule height#3 width\z@ depth\z@\hfil}%
  }%
 }%
}%
\newif\ifwasdraft
\def\GRAPHIC#1#2#3#4#5{%
 \ifnum\draft=\@ne\draftbox{#2}{#3}{#4}{#5}%
  \else\graffile{#1}{#3}{#4}{#5}%
  \fi
 }%
\def\addtoLaTeXparams#1{%
    \edef\LaTeXparams{\LaTeXparams #1}}%
\newif\ifBoxFrame \BoxFramefalse
\newif\ifOverFrame \OverFramefalse
\newif\ifUnderFrame \UnderFramefalse
\def\BOXTHEFRAME#1{%
   \hbox{%
      \ifBoxFrame
         \frame{#1}%
      \else
         {#1}%
      \fi
   }%
}
\def\doFRAMEparams#1{\BoxFramefalse\OverFramefalse\UnderFramefalse\readFRAMEparams#1\end}%
\def\readFRAMEparams#1{%
 \ifx#1\end%
  \let\next=\relax
  \else
  \ifx#1i\dispkind=\z@\fi
  \ifx#1d\dispkind=\@ne\fi
  \ifx#1f\dispkind=\tw@\fi
  \ifx#1t\addtoLaTeXparams{t}\fi
  \ifx#1b\addtoLaTeXparams{b}\fi
  \ifx#1p\addtoLaTeXparams{p}\fi
  \ifx#1h\addtoLaTeXparams{h}\fi
  \ifx#1X\BoxFrametrue\fi
  \ifx#1O\OverFrametrue\fi
  \ifx#1U\UnderFrametrue\fi
  \ifx#1w
    \ifnum\draft=1\wasdrafttrue\else\wasdraftfalse\fi
    \draft=\@ne
  \fi
  \let\next=\readFRAMEparams
  \fi
 \next
 }%
\def\IFRAME#1#2#3#4#5#6{%
      \bgroup
      \let\QCTOptA\empty
      \let\QCTOptB\empty
      \let\QCBOptA\empty
      \let\QCBOptB\empty
      #6%
      \parindent=0pt%
      \leftskip=0pt
      \rightskip=0pt
      \setbox0 = \hbox{\QCBOptA}%
      \@tempdima = #1\relax
      \ifOverFrame
          \typeout{This is not implemented yet}%
          \show\HELP
      \else
         \ifdim\wd0>\@tempdima
            \advance\@tempdima by \@tempdima
            \ifdim\wd0 >\@tempdima
               \textwidth=\@tempdima
               \setbox1 =\vbox{%
                  \noindent\hbox to \@tempdima{\hfill\GRAPHIC{#5}{#4}{#1}{#2}{#3}\hfill}\\%
                  \noindent\hbox to \@tempdima{\parbox[b]{\@tempdima}{\QCBOptA}}%
               }%
               \wd1=\@tempdima
            \else
               \textwidth=\wd0
               \setbox1 =\vbox{%
                 \noindent\hbox to \wd0{\hfill\GRAPHIC{#5}{#4}{#1}{#2}{#3}\hfill}\\%
                 \noindent\hbox{\QCBOptA}%
               }%
               \wd1=\wd0
            \fi
         \else
            \ifdim\wd0>0pt
              \hsize=\@tempdima
              \setbox1 =\vbox{%
                \unskip\GRAPHIC{#5}{#4}{#1}{#2}{0pt}%
                \break
                \unskip\hbox to \@tempdima{\hfill \QCBOptA\hfill}%
              }%
              \wd1=\@tempdima
           \else
              \hsize=\@tempdima
              \setbox1 =\vbox{%
                \unskip\GRAPHIC{#5}{#4}{#1}{#2}{0pt}%
              }%
              \wd1=\@tempdima
           \fi
         \fi
         \@tempdimb=\ht1
         \advance\@tempdimb by \dp1
         \advance\@tempdimb by -#2%
         \advance\@tempdimb by #3%
         \leavevmode
         \raise -\@tempdimb \hbox{\box1}%
      \fi
      \egroup%
}%
\def\DFRAME#1#2#3#4#5{%
 \begin{center}
     \let\QCTOptA\empty
     \let\QCTOptB\empty
     \let\QCBOptA\empty
     \let\QCBOptB\empty
     \ifOverFrame 
        #5\QCTOptA\par
     \fi
     \GRAPHIC{#4}{#3}{#1}{#2}{\z@}
     \ifUnderFrame 
        \nobreak\par #5\QCBOptA
     \fi
 \end{center}%
 }%
\def\FFRAME#1#2#3#4#5#6#7{%
 \begin{figure}[#1]%
  \let\QCTOptA\empty
  \let\QCTOptB\empty
  \let\QCBOptA\empty
  \let\QCBOptB\empty
  \ifOverFrame
    #4
    \ifx\QCTOptA\empty
    \else
      \ifx\QCTOptB\empty
        \caption{\QCTOptA}%
      \else
        \caption[\QCTOptB]{\QCTOptA}%
      \fi
    \fi
    \ifUnderFrame\else
      \label{#5}%
    \fi
  \else
    \UnderFrametrue%
  \fi
  \begin{center}\GRAPHIC{#7}{#6}{#2}{#3}{\z@}\end{center}%
  \ifUnderFrame
    #4
    \ifx\QCBOptA\empty
      \caption{}%
    \else
      \ifx\QCBOptB\empty
        \caption{\QCBOptA}%
      \else
        \caption[\QCBOptB]{\QCBOptA}%
      \fi
    \fi
    \label{#5}%
  \fi
  \end{figure}%
 }%
\def\makeactives{
  \catcode`\"=\active
  \catcode`\;=\active
  \catcode`\:=\active
  \catcode`\'=\active
  \catcode`\~=\active
}
   \gdef\activesoff{%
      \def"{\string"}
      \def;{\string;}
      \def:{\string:}
      \def'{\string'}
      \def~{\string~}
    }
\def\FRAME#1#2#3#4#5#6#7#8{%
 \bgroup
 \@ifundefined{bbl@deactivate}{}{\activesoff}
 \ifnum\draft=\@ne
   \wasdrafttrue
 \else
   \wasdraftfalse%
 \fi
 \def\LaTeXparams{}%
 \dispkind=\z@
 \def\LaTeXparams{}%
 \doFRAMEparams{#1}%
 \ifnum\dispkind=\z@\IFRAME{#2}{#3}{#4}{#7}{#8}{#5}\else
  \ifnum\dispkind=\@ne\DFRAME{#2}{#3}{#7}{#8}{#5}\else
   \ifnum\dispkind=\tw@
    \edef\@tempa{\noexpand\FFRAME{\LaTeXparams}}%
    \@tempa{#2}{#3}{#5}{#6}{#7}{#8}%
    \fi
   \fi
  \fi
  \ifwasdraft\draft=1\else\draft=0\fi{}%
  \egroup
 }%
\def\TEXUX#1{"texux"}
\long\def\QQQ#1#2{%
     \long\expandafter\def\csname#1\endcsname{#2}}%
\long\def\QQA#1#2{}%
\def\QTR#1#2{{\csname#1\endcsname #2}}
\def\EXPAND#1[#2]#3{}%
\def\NOEXPAND#1[#2]#3{}%
\def\LaTeXparent#1{}%
\def\ChildStyles#1{}%
\def\ChildDefaults#1{}%
\def\QTagDef#1#2#3{}%
\def\QQfnmark#1{\footnotemark}
\def\makeatletter\input gnuindex.sty\makeatother\makeindex{\makeatletter\input gnuindex.sty\makeatother\makeindex}%
\def\initial#1{\bigbreak{\raggedright\large\bf #1}\kern 2\p@\penalty3000}}%
 \def\abstract{%
  \if@twocolumn
   \section*{Abstract (Not appropriate in this style!)}%
   \else \small 
   \begin{center}{\bf Abstract\vspace{-.5em}\vspace{\z@}}\end{center}%
   \quotation 
   \fi
  }%
   \def\registered{\relax\ifmmode{}\r@gistered
                    \else$\m@th\r@gistered$\fi}%
 \def\r@gistered{^{\ooalign
  {\hfil\raise.07ex\hbox{$\scriptstyle\rm\text{R}$}\hfil\crcr
  \mathhexbox20D}}}}{}%
\newdimen\theight
\def\Column{%
 \vadjust{\setbox\z@=\hbox{\scriptsize\quad\quad tcol}%
  \theight=\ht\z@\advance\theight by \dp\z@\advance\theight by \lineskip
  \kern -\theight \vbox to \theight{%
   \rightline{\rlap{\box\z@}}%
   \vss
   }%
  }%
 }%
\def\qed{%
 \ifhmode\unskip\nobreak\fi\ifmmode\ifinner\else\hskip5\p@\fi\fi
 \hbox{\hskip5\p@\vrule width4\p@ height6\p@ depth1.5\p@\hskip\p@}%
 }%
\def\miss{\hbox{\vrule height2\p@ width 2\p@ depth\z@}}%
\def\tcol#1{{\baselineskip=6\p@ \vcenter{#1}} \Column}  %
\def\newfmtname{LaTeX2e}
\def\chkcompat{%
   \if@compatibility
   \else
     \usepackage{latexsym}
   \fi
}
  \DeclareOldFontCommand{\rm}{\normalfont\rmfamily}{\mathrm}
  \DeclareOldFontCommand{\sf}{\normalfont\sffamily}{\mathsf}
  \DeclareOldFontCommand{\tt}{\normalfont\ttfamily}{\mathtt}
  \DeclareOldFontCommand{\bf}{\normalfont\bfseries}{\mathbf}
  \DeclareOldFontCommand{\it}{\normalfont\itshape}{\mathit}
  \DeclareOldFontCommand{\sl}{\normalfont\slshape}{\@nomath\sl}
  \DeclareOldFontCommand{\sc}{\normalfont\scshape}{\@nomath\sc}
\def\alpha{\Greekmath 010B }%
\def\beta{\Greekmath 010C }%
\def\gamma{\Greekmath 010D }%
\def\delta{\Greekmath 010E }%
\def\epsilon{\Greekmath 010F }%
\def\zeta{\Greekmath 0110 }%
\def\eta{\Greekmath 0111 }%
\def\theta{\Greekmath 0112 }%
\def\iota{\Greekmath 0113 }%
\def\kappa{\Greekmath 0114 }%
\def\lambda{\Greekmath 0115 }%
\def\mu{\Greekmath 0116 }%
\def\nu{\Greekmath 0117 }%
\def\xi{\Greekmath 0118 }%
\def\pi{\Greekmath 0119 }%
\def\rho{\Greekmath 011A }%
\def\sigma{\Greekmath 011B }%
\def\tau{\Greekmath 011C }%
\def\upsilon{\Greekmath 011D }%
\def\phi{\Greekmath 011E }%
\def\chi{\Greekmath 011F }%
\def\psi{\Greekmath 0120 }%
\def\omega{\Greekmath 0121 }%
\def\varepsilon{\Greekmath 0122 }%
\def\vartheta{\Greekmath 0123 }%
\def\varpi{\Greekmath 0124 }%
\def\varrho{\Greekmath 0125 }%
\def\varsigma{\Greekmath 0126 }%
\def\varphi{\Greekmath 0127 }%
\def\nabla{\Greekmath 0272 }
\def\FindBoldGroup{%
   {\setbox0=\hbox{$\mathbf{x\global\edef\theboldgroup{\the\mathgroup}}$}}%
}
\def\Greekmath#1#2#3#4{%
    \if@compatibility
        \ifnum\mathgroup=\symbold
           \mathchoice{\mbox{\boldmath$\displaystyle\mathchar"#1#2#3#4$}}%
                      {\mbox{\boldmath$\textstyle\mathchar"#1#2#3#4$}}%
                      {\mbox{\boldmath$\scriptstyle\mathchar"#1#2#3#4$}}%
                      {\mbox{\boldmath$\scriptscriptstyle\mathchar"#1#2#3#4$}}%
        \else
           \mathchar"#1#2#3#4%
        \fi 
    \else 
        \FindBoldGroup
        \ifnum\mathgroup=\theboldgroup 
           \mathchoice{\mbox{\boldmath$\displaystyle\mathchar"#1#2#3#4$}}%
                      {\mbox{\boldmath$\textstyle\mathchar"#1#2#3#4$}}%
                      {\mbox{\boldmath$\scriptstyle\mathchar"#1#2#3#4$}}%
                      {\mbox{\boldmath$\scriptscriptstyle\mathchar"#1#2#3#4$}}%
        \else
           \mathchar"#1#2#3#4%
        \fi     	    
	  \fi}
\newif\ifGreekBold  \GreekBoldfalse
\let\SAVEPBF=\pbf
\def\pbf{\GreekBoldtrue\SAVEPBF}%
  \newcounter{equationnumber}  
  \def\mathletters{%
     \addtocounter{equation}{1}
     \edef\@currentlabel{\theequation}%
     \setcounter{equationnumber}{\c@equation}
     \setcounter{equation}{0}%
     \edef\theequation{\@currentlabel\noexpand\alph{equation}}%
  }
    \def\BibTeX{{\rm B\kern-.05em{\sc i\kern-.025em b}\kern-.08em
                 T\kern-.1667em\lower.7ex\hbox{E}\kern-.125emX}}}{}%
\def\AmS{{\protect\usefont{OMS}{cmsy}{m}{n}%
                A\kern-.1667em\lower.5ex\hbox{M}\kern-.125emS}}}{}%
\let\DOTSI\relax
\def\RIfM@{\relax\ifmmode}%
\def\FN@{\futurelet\next}%
\def\iint{\DOTSI\intno@\tw@\FN@\ints@}%
\def\iiint{\DOTSI\intno@\thr@@\FN@\ints@}%
\def\iiiint{\DOTSI\intno@4 \FN@\ints@}%
\def\idotsint{\DOTSI\intno@\z@\FN@\ints@}%
\def\ints@{\findlimits@\ints@@}%
\newif\iflimtoken@
\newif\iflimits@
\def\findlimits@{\limtoken@true\ifx\next\limits\limits@true
 \else\ifx\next\nolimits\limits@false\else
 \limtoken@false\ifx\ilimits@\nolimits\limits@false\else
 \ifinner\limits@false\else\limits@true\fi\fi\fi\fi}%
\def\multint@{\int\ifnum\intno@=\z@\intdots@                          
 \else\intkern@\fi                                                    
 \ifnum\intno@>\tw@\int\intkern@\fi                                   
 \ifnum\intno@>\thr@@\int\intkern@\fi                                 
 \int}
\def\multintlimits@{\intop\ifnum\intno@=\z@\intdots@\else\intkern@\fi
 \ifnum\intno@>\tw@\intop\intkern@\fi
 \ifnum\intno@>\thr@@\intop\intkern@\fi\intop}%
\def\intic@{%
    \mathchoice{\hskip.5em}{\hskip.4em}{\hskip.4em}{\hskip.4em}}%
\def\negintic@{\mathchoice
 {\hskip-.5em}{\hskip-.4em}{\hskip-.4em}{\hskip-.4em}}%
\def\ints@@{\iflimtoken@                                              
 \def\ints@@@{\iflimits@\negintic@
   \mathop{\intic@\multintlimits@}\limits                             
  \else\multint@\nolimits\fi                                          
  \eat@}
 \else                                                                
 \def\ints@@@{\iflimits@\negintic@
  \mathop{\intic@\multintlimits@}\limits\else
  \multint@\nolimits\fi}\fi\ints@@@}%
\def\intkern@{\mathchoice{\!\!\!}{\!\!}{\!\!}{\!\!}}%
\def\plaincdots@{\mathinner{\cdotp\cdotp\cdotp}}%
\def\intdots@{\mathchoice{\plaincdots@}%
 {{\cdotp}\mkern1.5mu{\cdotp}\mkern1.5mu{\cdotp}}%
 {{\cdotp}\mkern1mu{\cdotp}\mkern1mu{\cdotp}}%
 {{\cdotp}\mkern1mu{\cdotp}\mkern1mu{\cdotp}}}%
\def\RIfM@{\relax\protect\ifmmode}
\def\text{\RIfM@\expandafter\text@\else\expandafter\mbox\fi}
\let\nfss@text\text
\def\text@#1{\mathchoice
   {\textdef@\displaystyle\f@size{#1}}%
   {\textdef@\textstyle\tf@size{\firstchoice@false #1}}%
   {\textdef@\textstyle\sf@size{\firstchoice@false #1}}%
   {\textdef@\textstyle \ssf@size{\firstchoice@false #1}}%
   \glb@settings}
\def\textdef@#1#2#3{\hbox{{%
                    \everymath{#1}%
                    \let\f@size#2\selectfont
                    #3}}}
\newif\iffirstchoice@
\def\Let@{\relax\iffalse{\fi\let\\=\cr\iffalse}\fi}%
\def\vspace@{\def\vspace##1{\crcr\noalign{\vskip##1\relax}}}%
\def\multilimits@{\bgroup\vspace@\Let@
 \baselineskip\fontdimen10 \scriptfont\tw@
 \advance\baselineskip\fontdimen12 \scriptfont\tw@
 \lineskip\thr@@\fontdimen8 \scriptfont\thr@@
 \lineskiplimit\lineskip
 \vbox\bgroup\ialign\bgroup\hfil$\m@th\scriptstyle{##}$\hfil\crcr}%
\def\Sb{_\multilimits@}%
\def\endSb{\crcr\egroup\egroup\egroup}%
\def\Sp{^\multilimits@}%
\newdimen\ex@
\def\rightarrowfill@#1{$#1\m@th\mathord-\mkern-6mu\cleaders
 \hbox{$#1\mkern-2mu\mathord-\mkern-2mu$}\hfill
 \mkern-6mu\mathord\rightarrow$}%
\def\leftarrowfill@#1{$#1\m@th\mathord\leftarrow\mkern-6mu\cleaders
 \hbox{$#1\mkern-2mu\mathord-\mkern-2mu$}\hfill\mkern-6mu\mathord-$}%
\def\leftrightarrowfill@#1{$#1\m@th\mathord\leftarrow
\mkern-6mu\cleaders
 \hbox{$#1\mkern-2mu\mathord-\mkern-2mu$}\hfill
 \mkern-6mu\mathord\rightarrow$}%
\def\overrightarrow{\mathpalette\overrightarrow@}%
\def\overrightarrow@#1#2{\vbox{\ialign{##\crcr\rightarrowfill@#1\crcr
 \noalign{\kern-\ex@\nointerlineskip}$\m@th\hfil#1#2\hfil$\crcr}}}%
\def\overleftarrow{\mathpalette\overleftarrow@}%
\def\overleftarrow@#1#2{\vbox{\ialign{##\crcr\leftarrowfill@#1\crcr
 \noalign{\kern-\ex@\nointerlineskip}$\m@th\hfil#1#2\hfil$\crcr}}}%
\def\overleftrightarrow{\mathpalette\overleftrightarrow@}%
\def\overleftrightarrow@#1#2{\vbox{\ialign{##\crcr
   \leftrightarrowfill@#1\crcr
 \noalign{\kern-\ex@\nointerlineskip}$\m@th\hfil#1#2\hfil$\crcr}}}%
\def\underrightarrow{\mathpalette\underrightarrow@}%
\def\underrightarrow@#1#2{\vtop{\ialign{##\crcr$\m@th\hfil#1#2\hfil
  $\crcr\noalign{\nointerlineskip}\rightarrowfill@#1\crcr}}}%
\def\underleftarrow{\mathpalette\underleftarrow@}%
\def\underleftarrow@#1#2{\vtop{\ialign{##\crcr$\m@th\hfil#1#2\hfil
  $\crcr\noalign{\nointerlineskip}\leftarrowfill@#1\crcr}}}%
\def\underleftrightarrow{\mathpalette\underleftrightarrow@}%
\def\underleftrightarrow@#1#2{\vtop{\ialign{##\crcr$\m@th
  \hfil#1#2\hfil$\crcr
 \noalign{\nointerlineskip}\leftrightarrowfill@#1\crcr}}}%
\def\qopnamewl@#1{\mathop{\operator@font#1}\nlimits@}
\let\nlimits@\displaylimits
\def\setboxz@h{\setbox\z@\hbox}
\def\varlim@#1#2{\mathop{\vtop{\ialign{##\crcr
 \hfil$#1\m@th\operator@font lim$\hfil\crcr
 \noalign{\nointerlineskip}#2#1\crcr
 \noalign{\nointerlineskip\kern-\ex@}\crcr}}}}
 \def\rightarrowfill@#1{\m@th\setboxz@h{$#1-$}\ht\z@\z@
  $#1\copy\z@\mkern-6mu\cleaders
  \hbox{$#1\mkern-2mu\box\z@\mkern-2mu$}\hfill
  \mkern-6mu\mathord\rightarrow$}
\def\leftarrowfill@#1{\m@th\setboxz@h{$#1-$}\ht\z@\z@
  $#1\mathord\leftarrow\mkern-6mu\cleaders
  \hbox{$#1\mkern-2mu\copy\z@\mkern-2mu$}\hfill
  \mkern-6mu\box\z@$}
\def\projlim{\qopnamewl@{proj\,lim}}
\def\injlim{\qopnamewl@{inj\,lim}}
\def\varinjlim{\mathpalette\varlim@\rightarrowfill@}
\def\varprojlim{\mathpalette\varlim@\leftarrowfill@}
\def\varliminf{\mathpalette\varliminf@{}}
\def\varliminf@#1{\mathop{\underline{\vrule\@depth.2\ex@\@width\z@
   \hbox{$#1\m@th\operator@font lim$}}}}
\def\varlimsup{\mathpalette\varlimsup@{}}
\def\varlimsup@#1{\mathop{\overline
  {\hbox{$#1\m@th\operator@font lim$}}}}
\def\align{\@verbatim \frenchspacing\@vobeyspaces \@alignverbatim
You are using the "align" environment in a style in which it is not defined.}
\let\csname endalign*\endcsname =\endtrivlist
\def\alignat{\@verbatim \frenchspacing\@vobeyspaces \@alignatverbatim
You are using the "alignat" environment in a style in which it is not defined.}
\let\csname endalignat*\endcsname =\endtrivlist
\def\xalignat{\@verbatim \frenchspacing\@vobeyspaces \@xalignatverbatim
You are using the "xalignat" environment in a style in which it is not defined.}
\let\csname endxalignat*\endcsname =\endtrivlist
\def\gather{\@verbatim \frenchspacing\@vobeyspaces \@gatherverbatim
You are using the "gather" environment in a style in which it is not defined.}
\let\csname endgather*\endcsname =\endtrivlist
\def\multiline{\@verbatim \frenchspacing\@vobeyspaces \@multilineverbatim
You are using the "multiline" environment in a style in which it is not defined.}
\let\csname endmultiline*\endcsname =\endtrivlist
\def\arrax{\@verbatim \frenchspacing\@vobeyspaces \@arraxverbatim
You are using a type of "array" construct that is only allowed in AmS-LaTeX.}
\def\tabulax{\@verbatim \frenchspacing\@vobeyspaces \@tabulaxverbatim
You are using a type of "tabular" construct that is only allowed in AmS-LaTeX.}
\let\csname endarrax*\endcsname =\endtrivlist
\let\csname endtabulax*\endcsname =\endtrivlist
\def\@@eqncr{\let\@tempa\relax
    \ifcase\@eqcnt \def\@tempa{& & &}\or \def\@tempa{& &}%
      \else \def\@tempa{&}\fi
     \@tempa
     \if@eqnsw
        \iftag@
           \@taggnum
        \else
           \@eqnnum\stepcounter{equation}%
        \fi
     \fi
     \global\tag@false
     \global\@eqnswtrue
     \global\@eqcnt\z@\cr}
 \def\endequation{%
     \ifmmode\ifinner 
      \iftag@
        \addtocounter{equation}{-1} 
        $\hfil
           \displaywidth\linewidth\@taggnum\egroup \endtrivlist
        \global\tag@false
        \global\@ignoretrue   
      \else
        $\hfil
           \displaywidth\linewidth\@eqnnum\egroup \endtrivlist
        \global\tag@false
        \global\@ignoretrue 
      \fi
     \else   
      \iftag@
        \addtocounter{equation}{-1} 
        \eqno \hbox{\@taggnum}
        \global\tag@false%
        $$\global\@ignoretrue
      \else
        \eqno \hbox{\@eqnnum}
        $$\global\@ignoretrue
      \fi
     \fi\fi
 } 
 \newif\iftag@ \tag@false
 \def\tag{\@ifnextchar*{\@tagstar}{\@tag}}
 \def\@tag#1{%
     \global\tag@true
     \global\def\@taggnum{(#1)}}
 \def\@tagstar*#1{%
     \global\tag@true
     \global\def\@taggnum{#1}%
}
\title{Bayesian Decision Theory and Stochastic Independence}
\author{Philippe Mongin \institute{CNRS \& HEC Paris}\email{mongin@greg-hec.com}}
\begin{document}

\maketitle

\begin{abstract}
Stochastic independence has a complex status in
probability theory. It is not part of the definition of a probability
measure, but it is nonetheless an essential property for the mathematical
development of this theory. Bayesian decision theorists such as Savage can
be criticized for being silent about stochastic independence. From their
current preference axioms, they can derive no more than the definitional
properties of a probability measure. In a new framework of twofold
uncertainty, we introduce preference axioms that entail not only these
definitional properties, but also the stochastic independence of the two
sources of uncertainty.\ This goes some way towards filling a curious lacuna
in Bayesian decision theory.
\end{abstract}

\section{Introduction and preview}

The property of stochastic (or statistical) independence occupies a rather
special place in the mathematical theory of probability.\ It does not belong
to the properties that this theory singles out to define a probability
measure axiomatically.\ It is indeed a property of given events for a given
probability measure, and\ its adoption can only result from a modelling
choice to fit the particular situation. At the same time, probability theory
obviously uses independence assumptions extensively; they are needed for
such major results as the Laws of Large Numbers, various theorems on
stochastic processes, and some central results of statistical theory. For
Kolmogorov \cite{Kolmo56} himself, the inventor of the axiomatic definition,
this property occupies a "central position in the theory of probability"
(1933-1950, p.\ 8). One would thus expect all theories of the foundations of
probability to pay careful attention to stochastic independence, but
curiously, this is not the case with Bayesian decision theory, one of the
most influential among these theories.

Bayesian decision theorists claim that an agent's uncertain beliefs should
be represented be a probability measure and ground this claim on a pragmatic
argument.\ They formally show that if the agents' preferences over uncertain
prospects - typically, but not exclusively over monetary bets - obey certain
requirements of practical rationality, these agents' beliefs should conform
to the axiomatic definition of a probability measure.\ Bayesian decision
theorists hardly go beyond this demonstration, and in particular have
nothing to add on stochastic independence.\ Thus, they can be criticized for
falling short of justifying the probability calculus as it actually works
and stopping too early in their foundational work.

More technically, Bayesian decision theorists prove a representation theorem
for preferences over uncertain prospects that involves two sets of
quantities, utilities (over the consequences of prospects) and probabilities
(over the uncertain events), these two items being combined by the familiar
rule of expected utility (EU). After\ Ramsey's and de Finetti's sketches,
this strategy was implemented in full detail by Savage (1954) \cite{Sava54}%
.\ In a subsequent simplification of Savage's system, Anscombe and Aumann
(1963) \cite{AnAu63} took some probability values for granted in order to
obtain the remaining ones more easily. All these authors derive a \textit{%
prior} probability measure to represent initially uncertain beliefs.\ Savage
(1954-1972, p. 44) extends this argument to obtain a \textit{posterior}
probability measure, i.e., one that represents beliefs after a partial
resolution of uncertainty, and he shows that this posterior obeys Bayes's
rule of revision; literally, the "Bayesian" label becomes fully justified
only at this stage. This is also where Savage stops.\ He however
acknowledges that a treatment of stochastic independence should have come
next.

Two comments in Savage make this point clearly.\ Having axiomatized a
qualitative probability relation, he complains that "the notions of
independence and irrelevance have ... no analogues in qualitative
probability; this is surprising and unfortunate, for these notions seem to
evoke a strong intuitive response" (1954-1972, p.\ 44). Later, he reiterates
the complaint differently: "it would be desirable, if possible, to find a
simple qualitative personal description of independence between events" (p.\
91). (Savage prefers the expression of "personal probability" to the more
received one of "subjective probability".) In today's Bayesian theory, the
first comment is not justified anymore. There now exist richer systems of
qualitative probability than Savage's, in which a special relation serves to
express the stochastic independence of two events or two random variables
(see Domotor, 1969 \cite{Domo69}, Fine, 1971 \cite{Fine73}, Kaplan and Fine,
1977 \cite{KaFi77}, Luce and Nahrens, 1978 \cite{LuNa78}, to cite but the
early papers). However, the second comment is still topical.\ We understand
it as referring to preferences over uncertain prospects, i.e., the ultimate
primitive in Savage's construction. To the best of our knowledge, Bayesian
decision theory has not yet explicated stochastic independence in terms of
this overarching concept. \ 

The present paper is an attempt to do so. We assume that there are \textit{%
two distinctive sources of uncertainty}, and accordingly that states of
nature have the form of two-component vectors. As in Savage and in Anscombe
and Aumann, we define uncertain prospects to be mappings from states of the
world to consequences, and take the agent's preferences over these prospects
to be the only axiomatic primitive.\ Our construction leans towards Anscombe
and Aumann by considering a finite number of states and by making a
structural assumption on consequences (they are real numbers). However, it
also leans towards Savage because we eschew any numerical data, hence
Anscombe and Aumann's questionable trick of taking some probability values
for granted. The twofold uncertainty framework has recently been introduced
by Mongin and Pivato (2016) \cite{MoPi16} with a theoretical purpose
different from the present one.\footnote{%
That paper aims at offering a solution to the classic problem of defining a
normatively compelling notion of social preference under uncertainty.}

Our axioms entail that there exists an EU representation for the agent's
preferences, and that the probability measure in this representation
decomposes \textit{multiplicatively} on the two sources of uncertainty,
which establishes their stochastic independence.\ The same uniqueness
conditions hold as in standard EU representation theorems. A heuristic
argument indicates where to locate the stochastic independence property in
our preference axioms.\ They state that, for either source of uncertainty,
there exist preferences conditional on each value this source can take, and
moreover that these conditional preferences are invariant across possible
values.\ This heuristically means that the realization of one of the two
uncertainty components does not affect the agent's preferences over those
prospects which only depend on the other, still unknown component.\ In a
betting interpretation, if the initial bets relate, say, to tomorrow's
weather and tomorrow's economic conditions, and the agent somehow comes to
know what tomorrow's weather will be, this does not affect the agent's
preferences amongst bets on tomorrow's economic conditions, and vice-versa.\ 

We offer two representation theorems along the lines just explained. The
first adapts a result already obtained in Mongin and Pivato (2015) \cite%
{MoPi15}. We develop it here because it neatly exemplifies how Bayesian
decision theory can approach stochastic independence. Although this theorem
implements the heuristics of last paragraph, it is, in a subtle sense to be
explained, not yet entirely satisfactory.\ Hence we propose a second
representation theorem, which is the technical novelty of this paper.
Section 2\ adds further motivations.\ Sections 3 and 4 state the two
representation theorems respectively.\ Section 5 returns to conceptual
comments and comparisons, while also sketching directions for future work.

\section{Further motivating the approach}

A corn producer must decide how much land to cultivate while not knowing
what the climatic conditions and the state of demand for corn will be at the
time of the harvest. Each cultivation policy can be analyzed as an uncertain
prospect, i.e., a mapping from the unknown states to the possible
consequences, here monetary proceeds.\ We will develop this example along a
Bayesian theorist's line, and heuristically reason backwards, taking for
granted what the Bayesian theorist would conclude, plus the target property
of stochastic independence.\ Recall the textbook definition since
Kolmogorov: given a probability space $(\Omega ,\mathcal{A},P)$, two events $%
A,B$ in $\mathcal{A}$ are said to be stochastically (or statistically)
independent if $P(A\cap B)=P(A).P(B)$. From this definition, others, which
are equally standard, follow concerning collections of events or random
variables.

Assuming for simplicity that climate and demand take two values, we fix two
sets $S=\left\{ s_{1},s_{2}\right\} $ and $T=\left\{ t_{1},t_{2}\right\} $
and define a state of nature to be any element of the product set $S\times T$%
. We now apply stochastic independence to each possible pair of events
(subsets) $\left\{ s\right\} \times T$ and $S\times \left\{ t\right\} $, or
by an obvious identification, to each possible pair $(s,t)$. The producer's
probabilities are thus given by the matrix:

$%
\begin{array}{ccc}
& t_{1} & t_{2} \\ 
s_{1} & p_{s_{1}}q_{t_{1}} & p_{s_{1}}q_{t_{2}} \\ 
s_{2} & p_{s_{2}}q_{t_{1}} & p_{s_{2}}q_{t_{2}}%
\end{array}%
$\noindent

\noindent where $(p_{s_{1}},p_{s_{2}})$ and $(q_{t_{1}},q_{t_{2}})$ are
probability vectors on $S$ and $T$, respectively. Now, a policy $\mathbf{X}$
for the producer can be represented by its monetary proceeds in the various
states:

$%
\begin{array}{ccc}
\mathbf{X} & t_{1} & t_{2} \\ 
s_{1} & x_{11} & x_{12} \\ 
s_{2} & x_{21} & x_{22}%
\end{array}%
$

\noindent Denote by $\succsim $ the producer's preferences over policies $%
\mathbf{X}$.\ The EU representation for $\succsim $ is

\[
V(\mathbf{X}%
)=p_{s_{1}}q_{t_{1}}u(x_{11})+p_{s_{1}}q_{t_{2}}u(x_{12})+p_{s_{2}}q_{t_{1}}u(x_{21})+p_{s_{2}}q_{t_{2}}u(x_{22})%
\text{.} 
\]%
This can be restated either as: \ 
\[
(\ast )\text{ }V(\mathbf{X})=p_{s_{1}}\left[
q_{t_{1}}u(x_{11})+q_{t_{2}}u(x_{12})\right] +p_{s_{2}}\left[
q_{t_{1}}u(x_{21})+q_{t_{2}}u(x_{22})\right] \text{,} 
\]%
or as:

\[
(\ast \ast )\text{ }V(\mathbf{X})=q_{t_{1}}\left[
p_{s_{1}}u(x_{11})+p_{s_{2}}u(x_{21})\right] +q_{t_{2}}\left[
p_{s_{1}}u(x_{12})+p_{s_{2}}u(x_{22})\right] \text{.} 
\]

The bracketed sums in $(\ast )$ contain utility representations for
conditional preferences on the possible values of $s$, and those in $(\ast
\ast )$ contain utility representations for conditional preferences on the
possible values of $t$. Thus, the overall conclusions entail that (i) 
\textit{conditional preferences are orderings}.\ Since the same functional
form $q_{t_{1}}u(.)+q_{t_{2}}u(.)$ appears in the two bracketed sums of $%
(\ast )$, and similarly, the same functional form $%
p_{s_{1}}u(.)+p_{s_{2}}u(.)$ appears in the two bracketed sums of $(\ast
\ast )$, these conclusions also entail that (ii) \textit{conditional
preferences are the same for different }$s$\textit{, and the same for
different }$t$. Lastly, from the same equations, if the conditional
orderings for both $s_{1}$ and $s_{2}$, or the conditional orderings for
both $t_{1}$ and $t_{2}$, agree to rank prospect $\mathbf{X}$ above prospect 
$\mathbf{Y}$, then the overall preference $\succsim $ ranks $\mathbf{X}$
above $\mathbf{Y}$.\ Thus, the conclusions also entail that (iii) \textit{%
preferences over prospects are increasing with respect to either family of
conditional preferences}.

Importantly, we have stated (i), (ii) and (iii) by abstracting from the EU
representation.\ Each of these properties can indeed be satisfied by more
general theories than Bayesian decision theory, and in particular, the 
\textit{dominance} property (iii) is well-known to apply to most existing
alternatives (like rank-dependent theory, see, e.g., Wakker, 2010 \cite%
{Wakk10}).

In the first result, we assume (i), (ii) and (iii), plus some background
conditions.\ Given the formal definition of a conditional, which is restated
below, it is actually possible to fuse (i) with (iii) and obtain an even
more condensed system.\ One may wonder how apparently weak necessary
conditions for the representation turn out also to be sufficient for it.\
The key point is that the conditions apply to $s$ and $t$ \textit{at the
same time}, and this creates the possibility of representing the preference $%
\succsim $ both in terms of $s$-conditionals and $t$-conditionals; comparing
these representations leads to the results.\ Their equivalence shows in the
fact that either the $p_{s}$ or the $q_{t}$ can be factored out from the
same sum - see $(\ast )$ and $(\ast \ast )$.

\section{A first representation theorem for stochastic independence}

Formally, there are two variables of interest, $s\in S$ and $t\in T$, and a
state of the world is any pair $(s,t)$ $\in $\ $S\times T$; we thus permit
the two variables to vary together in any possible way. For technical
reasons, we take $S$ and $T$ to be finite with cardinalities $\left\vert
S\right\vert $, $\left\vert T\right\vert \geq $ $2$. Prospects $\mathbf{X}$
are mappings from states $(s,t)$ to real numbers $x$, and we define the set
of prospects to be $%
\mathbb{R}
^{S\times T}$, thus putting no constraint on what counts as a prospect. The
sets of all probability functions on $S$, $T$ and $S\times T$ are denoted by 
$\Delta _{S}$, $\Delta _{T}$ and $\Delta _{S\times T}$, respectively.

It is convenient to represent prospects $\mathbf{X}$ as $\left\vert
S\right\vert \times \left\vert T\right\vert $ matrices, with each $s$
standing for a row and each $t$ standing for a column. We will thus write $%
\mathbf{X}=\left[ x_{s}^{t}\right] _{s\in S}^{t\in T}$ , but sometimes also $%
\mathbf{X}=(\mathbf{x}_{1}$,...,$\mathbf{x}_{\left\vert S\right\vert })$,
where each component is a row vector $\mathbf{x}_{s}\in 
\mathbb{R}
^{T}$, or $\mathbf{X}=(\mathbf{x}^{1}$,...,$\mathbf{x}^{\left\vert
T\right\vert })$, where each component is a column vector $\mathbf{x}^{t}\in 
\mathbb{R}
^{S}$.

By assumption, the agent compares prospects in terms of an \textit{ex ante}
preference relation $\succsim $.\ As a maintained assumption, we take this
relation to be a continuous weak ordering, hence representable by a
continuous utility function. The other preference relations are obtained
from $\succsim $ as conditionals.\ There are three families of conditionals
to consider, i.e., $\left\{ \succsim _{s}\right\} _{s\in S}$, $\left\{
\succsim _{t}\right\} _{t\in T}$ and $\left\{ \succsim _{st}\right\} _{s\in
S,t\in T}$.\ The\ last family\ represents \textit{ex post} preferences, and\
the first two represent \textit{interim\ }preferences, since each relation
in these families depends on fixing one variable and letting the other vary,
and this amounts to resolving only part of the uncertainty.

We now formally define the various conditionals in terms of the master
relation $\succsim $. \ The \textit{conditional} \textit{of} $\succeq $ 
\textit{on} $s\in S$ is the relation $\succeq _{s}$ on $%
\mathbb{R}
^{T}$ defined by the property that for all $\mathbf{x}_{s},\mathbf{y}_{s}\in 
$ $%
\mathbb{R}
^{T}$, 
\begin{eqnarray*}
\mathbf{x}_{s}\text{ } &\succeq &_{s}\text{ }\mathbf{y}_{s}\text{\ iff }%
\mathbf{X\succsim Y} \\
\text{for some }\mathbf{X,Y} &\in &%
\mathbb{R}
^{S\times T}\text{ s.t. }\mathbf{x}_{s}\text{ is the }s\text{-row of }%
\mathbf{X}\text{, }\mathbf{y}_{s}\text{ is the }s\text{-row of }\mathbf{Y}%
\text{, } \\
&&\text{and }\mathbf{X}\text{ and }\mathbf{Y}\text{ are equal outside their }%
s\text{-row. }
\end{eqnarray*}%
Similarly, the \textit{conditional of} $\succeq $ \textit{on} $t\in T$ is
the relation $\succsim _{t}$ on $%
\mathbb{R}
^{S}$ defined by the property that for all $\mathbf{x}^{t},\mathbf{y}^{t}\in 
$ $%
\mathbb{R}
^{S}$ , 
\begin{eqnarray*}
\mathbf{x}^{t}\text{ } &\succsim &_{t}\text{ }\mathbf{y}^{t}\text{\ iff }%
\mathbf{X\succsim Y} \\
\text{for some }\mathbf{X,Y} &\in &%
\mathbb{R}
^{S\times T}\text{ s.t. }\mathbf{x}^{t}\text{ is the }t\text{-column of }%
\mathbf{X}\text{, }\mathbf{y}^{t}\text{ is the }t\text{-column of }\mathbf{Y}%
\text{,} \\
&&\text{and }\mathbf{X}\text{ and }\mathbf{Y}\text{ are equal outside their }%
t\text{-column. }
\end{eqnarray*}

\noindent By themselves, these definitions do not make conditionals weak
orderings. By a well-known fact of decision theory, $\succeq _{s}$ is a 
\textit{weak ordering} if and only if the choice of $\mathbf{X},\mathbf{Y}$
in the definition of $\succeq _{s}$ is immaterial, or more precisely, if and
only if $\mathbf{X\succsim Y\Longleftrightarrow }$ $\mathbf{X}^{\prime }%
\mathbf{\succsim Y}^{\prime }$ when $\mathbf{X}^{\prime }\mathbf{,Y}^{\prime
}$ also satisfy the condition stated for $\mathbf{X},\mathbf{Y}$ in this
definition. When this holds, $\succeq $ \ is said to be \textit{weakly
separable} in $s$.\ By another well-known fact, weak separability in a
factor (or set of factors) is equivalent to the property that $\mathbf{%
\succsim }$ is \textit{increasing} with the conditional on this factor (or
the conditionals of the set of factors). That is to say, for all $\mathbf{X,Y%
}\in 
\mathbb{R}
^{S\times T}$, if $\mathbf{x}_{s}$ $\succeq _{s}$ $\mathbf{y}_{s}$\ for all $%
s$, then $\mathbf{X\succsim Y}$; and if moreover $\mathbf{x}_{s}$ $\succ
_{s} $ $\mathbf{y}_{s}$\ for some $s$, then $\mathbf{X\succ Y}$.\footnote{%
By $\succ $, we mean the \textit{strict} preference associated with the weak
preference $\succeq $, and similarly for for $\succ _{s}$, $\succ ^{t}$ and $%
\succ _{st}$.} Everything said for $s$ of course applies to $t$. Combining
the two well-known facts, we see that conditions (i) and (iii) of the
previous section can be fused into the single requirement that all $\succeq
_{s}$ and all $\succeq _{t}$ are weak orderings.\footnote{%
For these definitions and basic facts, see Fishburn (1970) \cite{Fish70},
Keeney and Raiffa (1976) \cite{KeRa76}, and Wakker (1989) \cite{Wakk89}.}

The \textit{conditional of} $\succeq $ \textit{on} $(s,t)\in S\times T$ is
defined similarly. Since this conditional $\succsim _{st}$ compares real
numbers, it makes sense to identify it with the natural order of these
numbers. This amounts to saying that numbers represent desirable quantities,
be they money values, as in the producer example, or something else. Thus,
as another maintained assumption, we require that for all $(s,t)\in S\times
T $ and all $x_{s}^{t},y_{s}^{t}$ $\in 
\mathbb{R}
$,

\[
x_{s}^{t}\succsim _{st}y_{s}^{t}\text{ iff }x_{s}^{t}\geq y_{s}^{t}\text{.} 
\]

\noindent Since this turns the $\succsim _{st}$ into an ordering, $\succsim $
is increasing with each of these conditionals, hence also with each entry $%
x_{s}^{t}$ of $\mathbf{X}$.

Let us say that the family $\left\{ \succsim _{s}\right\} _{s\in S}$ ( $%
\left\{ \succsim _{t}\right\} _{t\in T}$) is \textit{invariant} if $\succsim
_{s}=$ $\succsim _{s^{\prime }}$ for all $s,s^{\prime }\in S$ (resp. $%
\succsim _{t}=$ $\succsim _{t^{\prime }}$ for all $t,t^{\prime }\in T$).\
Such requirements capture condition (ii) of previous section. Notice they
are not needed for the $\succsim _{st}$ since these are identical relations
by construction.

We are now ready for a representation theorem.

\bigskip

\begin{theorem}
The following conditions are equivalent:

\begin{itemize}
\item The conditionals $\succsim _{s}$ and $\succsim _{t}$ are weak
orderings for all $s\in S$ and all $t\in T$, and each family of conditionals
is invariant. \ 

\item There are an increasing, continuous function $u:{\mathcal{%
\mathbb{R}
}\longrightarrow }{\mathbb{R}}$, and strictly positive probability functions 
${\mathbf{p}}\in \Delta _{S}$ and $\mathbf{q}\in \Delta _{T}$, such that $%
\succeq $ is represented by the function $V:\mathcal{%
\mathbb{R}
}^{S\times T}{\longrightarrow }{\mathbb{R}}$ that computes the ${\mathbf{%
p\otimes q}}$-expected value of $u$, i.e., by the function defined as
follows: for all $\mathbf{X}=\left[ x_{s}^{t}\right] _{s\in S}^{t\in T}$ , 
\[
V({\mathbf{X}})\ :=\ \ \sum_{s\in S}\sum_{t\in T}p_{s}\,q_{t}u(x_{s}^{t})%
\text{.}\ 
\]%
$\ $\noindent 
\end{itemize}

\noindent In this format of EU\ representation, ${\mathbf{p}}$ and $\mathbf{q%
}$ are unique, and $u$ is unique up to positive affine transformations.
\end{theorem}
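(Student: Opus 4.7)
The plan is to prove the two implications separately. The necessity direction is a direct computation. Given $V(\mathbf{X})=\sum_{s,t}p_s q_t u(x_s^t)$, holding all rows except the $s$-th row fixed shows that $\succsim_s$ is represented up to an additive constant by $\mathbf{x}_s \mapsto \sum_t q_t u(x_s^t)$, which is manifestly independent of $s$; hence each $\succsim_s$ is a weak ordering and the family is invariant, and symmetrically for the $t$-conditionals. The interesting direction is sufficiency, which I would attack in two steps.

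The first step extracts an additive decomposition across the $|S|\cdot|T|$ cells. The maintained hypothesis that $\succsim_{st}$ coincides with $\geq$ on $\mathbb{R}$ implicitly demands single-cell weak separability and delivers strict monotonicity in each entry, and $\succsim$ is continuous by assumption. Since $|S|\cdot|T|\geq 4$ and every coordinate is a copy of $\mathbb{R}$, Debreu's additive representation theorem (in the coordinate-independence form of Wakker \cite{Wakk89}) applies and yields continuous strictly increasing functions $u_{s,t}:\mathbb{R}\to\mathbb{R}$ such that $V(\mathbf{X})=\sum_{s,t} u_{s,t}(x_s^t)$ represents $\succsim$.

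The second step uses invariance to collapse the $u_{s,t}$ into a product. Fixing all but the $s$-th row, the additive form shows that $\succsim_s$ is represented on $\mathbb{R}^T$ by $\mathbf{x}_s\mapsto\sum_t u_{s,t}(x_s^t)$. Invariance of $\{\succsim_s\}$ says that this is the same ordering across all $s$, and the uniqueness part of the additive representation theorem (unique up to a common positive affine scaling plus per-coordinate additive constants) then forces $u_{s,t}(x)=\alpha_s u_{s_0,t}(x)+\beta_{s,t}$ for a fixed reference $s_0$, with $\alpha_s>0$ depending only on $s$. The symmetric argument applied to the columns yields $u_{s,t}(x)=\gamma_t u_{s,t_0}(x)+\delta_{s,t}$ with $\gamma_t>0$. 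Substituting one relation into the other and setting $u(x):=u_{s_0,t_0}(x)$ produces $u_{s,t}(x)=\alpha_s\gamma_t u(x)+c_{s,t}$ for constants $c_{s,t}$. Dropping the inconsequential additive constant $\sum_{s,t}c_{s,t}$, applying a positive affine rescaling to absorb $(\sum_{s'}\alpha_{s'})(\sum_{t'}\gamma_{t'})$ into $u$, and defining $p_s:=\alpha_s/\sum_{s'}\alpha_{s'}$ and $q_t:=\gamma_t/\sum_{t'}\gamma_{t'}$ yields the claimed representation with strictly positive probability vectors $\mathbf{p}$ and $\mathbf{q}$ (positivity coming from strict monotonicity).

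The uniqueness clause ($\mathbf{p},\mathbf{q}$ unique, $u$ unique up to positive affine transformation) is routine bookkeeping built on the uniqueness part of Debreu's theorem together with the normalizations $\sum_s p_s=\sum_t q_t=1$. I expect the second step to be the main technical obstacle: the row and column uniqueness arguments each pin $u_{s,t}$ down only modulo undetermined scalars, and it takes the cross-substitution sketched above---essentially the factorization visible in $(\ast)$ versus $(\ast\ast)$ of Section 2---to show that the two families of scalars combine multiplicatively as $\alpha_s\gamma_t$ rather than in some more general way.
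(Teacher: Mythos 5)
Your overall architecture --- first an additive representation $\sum_{s,t}u_{s,t}(x_s^t)$ across cells, then an invariance-driven factorization of the $u_{s,t}$ into $\alpha_s\gamma_t u$ --- matches the paper's: the paper defers the proof to Corollary 1(c) of Mongin and Pivato (2015) \cite{MoPi15}, whose argument rests on exactly this additive-separability machinery, and your invariance step mirrors the functional-equation step the paper sketches for Theorem 2. Your second step is essentially sound: joint-cardinality uniqueness of additive representations on $\mathbb{R}^{T}$ and on $\mathbb{R}^{S}$ gives the two families of scalings, and the cross-substitution does force them to combine multiplicatively.

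The gap is in your first step. You derive additive separability across the $|S|\cdot|T|$ cells from single-cell weak separability, strict monotonicity, continuity, and the number of coordinates alone. That inference is invalid: for $n\geq 3$ coordinates, separability of every singleton together with monotonicity and continuity does not imply coordinate independence, and so does not trigger Debreu's (or Wakker's) additive representation theorem. For instance, $V(\mathbf{X})=(x_{s_1}^{t_1}+x_{s_1}^{t_2})(x_{s_2}^{t_1}+x_{s_2}^{t_2})$ on the positive orthant is continuous and strictly increasing in every cell (so every singleton is separable, and indeed so is every row), yet it admits no additive representation across the four cells --- its columns are not separable. What actually delivers additivity is the conjunction of the theorem's two hypotheses, namely that each row $\{s\}\times T$ and each column $S\times\{t\}$ is weakly separable, fed through Gorman's (1968) \cite{Gorm68} theorem on overlapping separable sets: a row and a column overlap in exactly one cell, so their unions, intersections and differences are separable, and iterating yields separability of every union of cells, which is the coordinate-independence condition your step 1 requires. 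As written, your step 1 never uses the row or column hypotheses at all (you first invoke them in step 2), so the additive decomposition is unsupported; once those hypotheses are routed through Gorman's theorem, the remainder of your argument goes through.
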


As was foreshadowed, the representation theorem combines the conclusions of
Bayesian decision theory with the desired property that the probability
measure (here a vector) is multiplicative in the two sources. This theorem
is Corollary 1(c) in Mongin and Pivato (2015) \cite{MoPi15}, but recast
autonomously and in a different formal language, so as to facilitate the
discussion of stochastic independence.

\section{A second representation theorem for stochastic independence}

In Theorem 1, strong results follow from a compact list of assumptions,
undoubtedly a feature of mathematical elegance, but also a cause for
conceptual dissatisfaction.\ Would it not be better to expand on the
assumptions and separate those which are responsible for the existence of
the EU representation and those which account for the stochastic
independence property occurring in this representation?\ This disentangling
would make sense on two counts: stochastic independence is an optional
property of probability measures (a logical point) and Bayesian decision
theory invests only the existence, not the properties, of such measures with
universal rationality significance (a normative point). However, the
assumptions of Theorem 1 cannot be divided in the appropriate way. This can
be seen as follows.\ By taking the $\succsim _{s}$ and $\succsim _{t}$ to be
merely orderings, not invariant orderings, one would get an additively
separable representation that does not separate the utility and probability
components of the added terms.\ By taking only one of the two families to
satisfy the ordering and invariance assumptions, one would only get a
representation that is only separable in that family and says nothing on
probabilities either.\footnote{%
The additively separable representation of the first case reads as 
\[
\sum_{s\in S,t\in T}v_{st}(x_{s}^{t})\text{,}
\]%
with increasing and continuous $v_{st}:%
\mathbb{R}
\rightarrow 
\mathbb{R}
$, $s\in S,t\in T$.\ In the second case, if the assumptions only hold for
the $\succsim _{s}$, the separable representation reads as 
\[
W(V_{1}(\mathbf{x}_{1}),...,V_{\left\vert S\right\vert }(\mathbf{x}%
_{\left\vert S\right\vert }))\text{,}
\]%
with increasing and continuous $W:$ $%
\mathbb{R}
^{S}\rightarrow 
\mathbb{R}
$ and $V_{s}:%
\mathbb{R}
^{T}\rightarrow 
\mathbb{R}
$, $s\in S$.\ These partial results follow from standard results in
separability theory.}

Fortunately, we can obtain a relevant partitioning of assumptions if we
enrich the decision-theoretic framework beyond the present, two-dimensional
stage.\ Let us suppose that the agent pays attention not only to the
uncertainty dimensions $s$ and $t$ of the final consequences, but also to a
third, heuristically unrelated dimension $i$, so that these consequences are
now represented by real numbers $x_{st}^{i}$.\ The added dimension can be
interpreted as being the \textit{time} at which these consequences occur,
and\ the corn producer decision problem can easily be reformulated
accordingly.\ For technical reasons, we require $i$ to take its values in a
finite set $I$ with cardinality $\left\vert I\right\vert \geq $ $2$.

Given this added dimension, alternatives become mappings from triples $%
(s,t,i)$ to the reals, that is three-dimensional arrays, $\mathbb{\ }$%
\[
\mathbb{X}=\left[ x_{s}^{t}\right] _{s\in S,t\in T}^{i\in I}\in \mathbb{R}%
^{S\times T\times I}\text{,} 
\]%
which may be rewritten as 
\[
\mathbb{X}=(\mathbf{X}_{1},...,\mathbf{X}_{\left\vert S\right\vert })\text{, 
}\mathbb{X=}(\mathbf{X}_{1},...,\mathbf{X}_{\left\vert T\right\vert })\text{
or }\mathbb{X}=(\mathbf{X}^{1},...,\mathbf{X}^{\left\vert I\right\vert })%
\text{,} 
\]%
where the components are matrix-valued, i.e., $\mathbf{X}%
_{s}=(x_{st}^{i})_{t\in T}^{i\in I}\in \mathbb{R}^{T\times I}$, $\mathbf{X}%
_{t}=(x_{st}^{i})_{s\in \mathcal{S}}^{i\in I}\in \mathbb{R}^{\mathcal{S}%
\times I}$ and $\mathbf{X}^{i}=(x_{st}^{i})_{s\in \mathcal{S},t\in T}\in 
\mathbb{R}^{\mathcal{S}\times T}$ respectively.

With $i$ representing time, alternatives should be viewed as \textit{%
contingent plans}, i.e., plans whose consequences in a given period depend
on the way the uncertainty - still represented by $(s,t)$ - is resolved in
that period. The matrix-valued objects just listed are interpreted in terms
of \textit{partly contingent plans} (when one dimension of uncertainty is
fixed) or \textit{dated prospects} (when the time dimension is fixed).
Vector-valued objects can also be interpreted - e.g., the $\mathbf{x}%
_{st}=(x_{st}^{i})^{i\in I}\in \mathbb{R}^{I}$ as \textit{non-contingent
plans}.\footnote{%
It is essential for these interpretations that each period is uncertain in
the same way as any other, i.e., no interaction exists between the
resolution of uncertainty and the passing of time.}

We assume the agent compares contingent plans in terms of a preference
relation $\succsim $, which is a continuous weak ordering, and this relation
gives rise to conditional relations that are defined as in the previous
section, \textit{mutatis mutandis}.\ Among the seven families of
conditionals, we pay special attention to $\left\{ \succsim _{s}\right\}
_{s\in S}$, $\left\{ \succsim _{t}\right\} _{t\in T}$, $\left\{ \succsim
^{i}\right\} ^{i\in I}$, $\left\{ \succsim _{st}\right\} _{s\in S,t\in T}$
and $\left\{ \succsim _{st}^{i}\right\} _{s\in S,t\in T}^{i\in I}$.\ The $%
\succsim _{s}$ and $\succsim _{t}$ compare plans $\mathbf{X}_{s}$ and $%
\mathbf{X}_{t}$, respectively; the $\succsim ^{i}$ compare dated prospects $%
\mathbf{X}^{i}$, the $\succsim _{st}$ non-contingent plans $\mathbf{x}_{st}$
and the $\succsim _{st}^{i}$real-valued consequences. As before, we assume
that each $\succsim _{st}^{i}$ coincides with the natural order of real
numbers, and that\ all other relations may or may not be weak orderings, and
may or may not form an invariant family.

\begin{theorem}
The following conditions are equivalent:

\begin{itemize}
\item The conditionals $\succsim ^{i}$ and $\succsim _{st}$ \ are weak
orderings, and the $\succsim _{st}$ family is invariant. \ 

\item There are increasing, continuous functions $u^{i}:{\mathcal{%
\mathbb{R}
}\longrightarrow }{\mathbb{R}}$, for all $i\in I$, and a strictly positive
probability function $\pi \in \Delta _{S\times T}$, such that $\succeq $ is
represented by the function $W:\mathcal{%
\mathbb{R}
}^{S\times T\times I}{\longrightarrow }{\mathbb{R}}$ that computes the $%
\mathbf{\pi }$-expected value of $\sum_{i\in I}u^{i}$, i.e., the function
thus defined: for all $\mathbb{X}=\left[ x_{s}^{t}\right] _{s\in S,t\in
T}^{i\in I}$ , 
\[
(\ast )\text{ }W(\mathbb{X})\ :=\ \ \sum_{s\in S,t\in T}\sum_{i\in I}\pi
_{st}\,u^{i}(x_{st}^{i})\text{.}\ 
\]%
$\ $\noindent
\end{itemize}

\noindent In this format of representation, $\mathbf{\pi }$ is unique, and
the $u^{i}$ are unique up to positive affine transformations with a common
multiplier.

Moreover, the following are equivalent:

\begin{itemize}
\item The assumptions made above on the $\succsim ^{i}$ and $\succsim _{st}$
\ hold, and furthermore the $\succsim _{s}$ are weak orderings and an
invariant family. \ 

\item The same conclusions hold, and moreover there are strictly positive
probability functions $\mathbf{p}\in \Delta _{S}$ and $\mathbf{q}\in \Delta
_{T}$ with $\pi =\mathbf{p\otimes q}$, so that $(\ast )$ becomes: for all $%
\mathbb{X}=\left[ x_{s}^{t}\right] _{s\in S,t\in T}^{i\in I}$ , 
\[
(\ast \ast )\text{ }W(\mathbb{X})=\ \ \sum_{s\in S}\sum_{t\in T}\sum_{i\in
I}p_{s}\,q_{t}u^{i}(x_{st}^{i})\text{.}\ 
\]%
$\ $\noindent 
\end{itemize}

\noindent In this alternative format, ${\mathbf{p}}$ and $\mathbf{q}$ are
unique, while the $u^{i}$ have the same uniqueness properties as before.
\end{theorem}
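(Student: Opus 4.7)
The plan is to prove both equivalences; the reverse directions amount to routine verification that the displayed $W$ represents a weak ordering whose conditionals are weak orderings and, in the relevant cases, form invariant families. The content lies in the forward directions.

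For the first equivalence, I begin by using that each $\succsim_{st}$ is a weak ordering, which makes $\succsim$ weakly separable in every block $\{(s,t)\}\times I$ of the $(s,t)$-partition. Since this partition has $|S||T|\ge 4$ essential factors (each factor is essential by monotonicity of $\succsim_{st}^i$), Debreu's additive-representation theorem yields continuous $U_{st}:\mathbb{R}^I\to\mathbb{R}$ with $W(\mathbb{X})=\sum_{s,t}U_{st}(\mathbf{x}_{st})$. I then bring in the hypothesis that each $\succsim^i$ is a weak ordering, i.e., that $\succsim$ is separable in every $i$-block $S\times T\times\{i\}$. Combined with the representation just obtained, a reference-value argument (fix all $\mathbf{x}_{st}^{-i}$ except at one $(s_0,t_0)$ and vary $x_{s_0 t_0}^i$) forces each $U_{st}$ to be additively separable in its $i$-th component; iterating over $i$ gives $U_{st}(\mathbf{x})=\sum_i V_{st}^i(x^i)$, whence
\[
W(\mathbb{X})=\sum_{s,t,i}V_{st}^i(x_{st}^i)
\]
with each $V_{st}^i$ continuous and strictly increasing. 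Invariance of $\{\succsim_{st}\}$ says every $\sum_i V_{st}^i$ represents the same ordering on $\mathbb{R}^I$, so by the uniqueness theorem for additive representations (unique up to a common positive scale and coordinate-wise constants) there exist $\pi_{st}>0$ and constants $c_{st}^i$ with $V_{st}^i=\pi_{st}u^i+c_{st}^i$, where $u^i:=V_{s_0 t_0}^i$ for a fixed reference. Substituting, absorbing the overall constant into a harmless additive shift of $W$, and rescaling $(u^i,\pi_{st})$ so that $\sum_{s,t}\pi_{st}=1$ gives $(\ast)$; uniqueness of $\pi$ and of the $u^i$ up to a common positive affine transformation follows from the uniqueness part of the additive representation theorem.

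For the second equivalence, assume in addition that each $\succsim_s$ is a weak ordering and $\{\succsim_s\}_s$ is invariant. From $(\ast)$, the conditional $\succsim_s$ on $\mathbb{R}^{T\times I}$ is represented by $g_s(\mathbf{X}_s):=\sum_{t,i}\pi_{st}u^i(x_{st}^i)$, which is already additively separable across the $|T||I|\ge 4$ essential factors indexed by $(t,i)$. Invariance says $g_s$ and $g_{s'}$ represent the same weak ordering for every pair $s,s'$, so by uniqueness of additive representations there exist $\alpha_{ss'}>0$ and constants $\gamma_{ss'}^{ti}$ with $\pi_{s't}u^i(x)=\alpha_{ss'}\pi_{st}u^i(x)+\gamma_{ss'}^{ti}$ for all $t,i,x$. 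Since each $u^i$ is strictly increasing, hence non-constant, evaluating this identity at two distinct values of $x$ forces $\pi_{s't}=\alpha_{ss'}\pi_{st}$ and $\gamma_{ss'}^{ti}=0$. Consequently the ratio $\pi_{st}/\pi_{s_0 t}$ is independent of $t$; writing $\pi_{st}=\rho_s\pi_{s_0 t}$ and normalizing via the marginals of $\pi$ yields strictly positive $\mathbf{p}\in\Delta_S$ and $\mathbf{q}\in\Delta_T$ with $\pi_{st}=p_sq_t$, which is $(\ast\ast)$. Uniqueness of $\mathbf{p}$ and $\mathbf{q}$ is inherited from uniqueness of $\pi$ together with the normalization.

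I expect the main obstacle to be the clean deployment of the uniqueness of additive representations at two nested levels: first when passing from the general additively separable $\sum_{s,t,i}V_{st}^i$ to the product-factored $\pi_{st}u^i$, where the bookkeeping of the constants $c_{st}^i$ must be done carefully so that the extracted $u^i$ really depends only on $i$ and is independent of the reference $(s_0,t_0)$; second when reapplying the same machinery on $\mathbb{R}^{T\times I}$ to force $\pi_{s't}=\alpha_{ss'}\pi_{st}$ uniformly in $t$. Once these two uniqueness applications are correctly set up, the multiplicative factorization $\pi=\mathbf{p}\otimes\mathbf{q}$ and the normalization to honest probability vectors are immediate.
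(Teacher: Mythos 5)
Your argument for the second equivalence is sound and is essentially the paper's own proof: the paper likewise reads off from $(\ast)$ an additive representation $\sum_{t\in T}\sum_{i\in I}\pi_{st}u^i(x_{st}^i)$ of each $\succsim_s$, uses invariance to relate the representations for different $s$ by an increasing transformation, invokes the affine-uniqueness of additive representations (via a Rado--Baker functional-equation argument, which is what your appeal to ``uniqueness of additive representations'' amounts to) to get $\pi_{s't}=\alpha_{ss'}\pi_{st}$ uniformly in $t$, and then factors $\pi$ through its marginals.

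The first equivalence, however, has a genuine gap at its very first step. You claim that because each $\succsim_{st}$ is a weak ordering --- i.e., $\succsim$ is weakly separable in each block $\{(s,t)\}\times I$ --- and there are at least three such blocks, Debreu's theorem delivers $W(\mathbb{X})=\sum_{s,t}U_{st}(\mathbf{x}_{st})$. That is not what Debreu's theorem says: additive separability across a partition requires separability of every union of blocks (equivalently, by Gorman, of a sufficiently rich overlapping family), not merely of each single block. Weak separability of each cell of a partition with three or more cells only yields $W=F(U_{11}(\mathbf{x}_{11}),\dots)$ for some increasing continuous aggregator $F$. A standard counterexample: $(x_1,x_2,x_3)\mapsto x_1x_2+x_2x_3+x_3x_1$ on a suitable positive cube is weakly separable in each coordinate, with invariant conditionals, yet admits no additive representation. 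So your starting representation is unsupported, and everything built on it --- the refinement to $\sum_{s,t,i}V^i_{st}$ and the extraction of $\pi_{st}u^i$ --- collapses with it. The repair is to use the two crossing families of separable sets \emph{together}: the blocks $\{(s,t)\}\times I$ and $S\times T\times\{i\}$ overlap pairwise in singletons, and Gorman's overlapping-separability theorem then yields complete separability, hence the fully additive form $\sum_{s,t,i}V^i_{st}(x^i_{st})$ in one step; after that your use of the invariance of the $\succsim_{st}$ and of the uniqueness clause does go through. This is in substance what Theorem 1 of Mongin and Pivato (2015) establishes, which is exactly the result the paper cites for this half rather than reproving it.
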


\begin{proof}
(Sketch).\ The first part follows from Theorem 1 in Mongin and Pivato (2015) 
\cite{MoPi15}; we leave it for the reader to check that the assumptions of
this theorem apply here. For the second part, we first observe that, for
every given $s\in S$, the $W(\mathbb{X})$ representation delivers a function 
$\mathbb{R}^{T\times I}\rightarrow 
\mathbb{R}
$ 
\[
\sum_{t\in T}\sum_{i\in I}\pi _{st}u^{i}(x_{st}^{i}) 
\]%
that represents the weak ordering $\succsim _{s}$.\ If we define $\pi
_{st}^{\prime }:=\pi _{st}$/$\sum_{t\in T}\pi _{st}$ for all $t\in T$, the
function 
\[
\sum_{t\in T}\sum_{i\in I}\pi _{st}^{\prime }u^{i}(x_{st}^{i}) 
\]%
is also a representation of $\succsim _{s}$. Now fix $s_{0}\in S$.\ By the
invariance of the $\succsim _{s}$ family, for every $s$ $\in S$, there is a
strictly increasing transformation $\Phi _{s}$ s.t.

\[
\sum_{t\in T}\sum_{i\in I}\pi _{s_{0}t}^{\prime }\,u^{i}(x_{st}^{i})=\Phi
_{s}\left( \sum_{t\in T}\sum_{i\in I}\pi _{st}^{\prime
}u^{i}(x_{st}^{i})\right) \text{.} 
\]%
As the $u^{i}$ are strictly increasing and continuous, $\Phi _{s}$ is $%
\mathbb{R}
\rightarrow 
\mathbb{R}
$, and we can apply a functional equation argument (Rado and Baker, 1987 
\cite{RaBa87}) and conclude that the $\Phi _{s}$ are positive affine
transformations.\ I.e., for all $s$ $\in S$, there exist numbers $\alpha
_{s}>0$ and $\beta _{s}$ s.t.

\[
\sum_{t\in T}\sum_{i\in I}\pi _{s_{0}t}^{\prime }\,u^{i}(x_{st}^{i})=\alpha
_{s}\left( \sum_{t\in T}\sum_{i\in I}\pi _{st}^{\prime
}u^{i}(x_{st}^{i})\right) +\beta _{s}\text{.} 
\]%
After redefining the functions so as to dispense with the constant terms, we
see that, for all $s$ $\in S$ and $t\in T$, $\pi _{s_{0}t}^{\prime
}\,=\alpha _{s}\pi _{st}^{\prime }$, and in fact (since proportional
probability vectors are equal) $\pi _{s_{0}t}^{\prime }\,=\pi _{st}^{\prime
} $.\ We thus rewrite $(\ast )$ as 
\[
\ \ \sum_{s\in \mathcal{S},t\in T}\sum_{i\in I}\pi _{s_{0}t}^{\prime
}\,(\sum_{t\in T}\pi _{st})(u^{i}(x_{st}^{i})\text{,}\ 
\]%
which is (**) if one takes $\mathbf{p=}$ $(\sum_{t\in T}\pi _{st})_{s\in S}$%
\ and $\mathbf{q}=$ $(\pi _{s_{o}t}^{\prime })_{t\in T}$. The uniqueness of $%
\mathbf{p}$ and $\mathbf{q}$ in this format of representation is easily
established.
\end{proof}

\bigskip

The two steps of Theorem 2 correspond to the EU representation theorem and
the addition made by stochastic independence, respectively.\ Interestingly,
the same assumption - that of invariant conditional orderings - underlies
both conclusions, but in the richer framework adopted here, it is possible
to apply it twice over, thus separating each step. Note also that it is
enough to apply the assumption to the $\succsim _{st}$ and one of the two $%
\succsim _{s}$ and $\succsim _{t}$ families; then, as the representation
shows, the other family automatically satisfies this assumption (the $%
\succsim _{t}$ can of course be interchanged with the $\succsim _{s}$ in the
theorem statement).

\bigskip

\section{Interpretations, comparisons and future directions}

The following heuristic argument will help locate the preference ancestor of
stochastic independence more precisely. Considering only four states for
simplicity, we suppose that the agent considers $(s_{1},t_{1})$ more likely
that $(s_{1},t_{2})$, and $(s_{2},t_{1})$ less likely than $(s_{2},t_{2})$.\
That is, from knowing how the uncertainty on $s$ is resolved, the agent
draws an inference on how the uncertainty on $t$ would be resolved.\ If the
agent reasoned probabilistically, the joint probabilities would of course 
\textit{not} decompose multiplicatively. We now check that the $s$%
-conditional preferences cannot be invariant.\ Take $\xi ,\xi ^{\prime }$
representing desirable quantities, with $\xi >\xi ^{\prime }$, and the
following prospects in matrix form:

$%
\begin{array}{ccc}
\mathbf{X} & t_{1} & t_{2} \\ 
s_{1} & \xi & \xi ^{\prime } \\ 
s_{2} & \xi ^{\prime } & \xi%
\end{array}%
$ and $%
\begin{array}{ccc}
\mathbf{Y} & t_{1} & t_{2} \\ 
s_{1} & \xi ^{\prime } & \xi \\ 
s_{2} & \xi & \xi ^{\prime }%
\end{array}%
$ .

\noindent The first line of $\mathbf{X}$, which puts the best consequence on
the more likely state, should be preferred to the first line of $\mathbf{Y}$%
, which puts it on the less likely state.\ By a similar comparison, the
second line of $\mathbf{X}$ should be preferred to the second line of $%
\mathbf{Y}$. Thus, the two $s$-conditional preferences differ.\ Contraposing
the argument, we see that invariant $s$-preferences express a one-way form
of \textit{informational independence}, i.e., that observing $s$ does not
bring any information on $t$.\ The opposite one-way form, i.e., that
observing $t$ does not bring any information on $s$, would be expressed by\
invariant $t$-preferences.\ The two invariance conditions together convey a
sense of \textit{mutual informational independence}.\ 

This concept of independence actually underlies one of the two main informal
explications of stochastic independence, the other relying on the very
different concept of \textit{mutual causal independence}\footnote{%
The causal explication of stochastic independence is actually the older and
more established of the two.\ On the problematic linkage of stochastic
independence with causality, see Spohn (1980) \cite{Spoh80}.}.\ Today's
probability texts sometimes entangle the two independence concepts, and it
is perhaps one virtue of the above axiomatization to bring out the
informational concept in a way that precludes any confounding with its
causal competitor.\ Our betting agents may or may not be influenced by what
they perceive of causal relations, but this is irrelevant, given that only
their decisions matter to the analysis.\ However, the informational concept
and associated informal explication can be pursued in various ways, and our
axiomatization more specifically contributes to giving both a pragmatic
slant. Generally, when it comes to the epistemic aspects of probability, a
divide appears between theorists who would like to explore these aspects 
\textit{per se}, and Bayesian decision theorists, who absorb them into their
practical rationality concerns.\footnote{%
Joyce (1998) \cite{Joyc98} has made this divide very clear, while adopting
the former position.}

The introduction sketched a comparison with two such theorists that can now
be made precise. We share with Anscombe and Aumann (1963) \cite{AnAu63} not
only the two assumptions of a finite set of states and a structured set of
consequences, but also that of two distinctive sources of uncertainty.
However, unlike them, we do not suppose that one of these sources already
has a probabilistic representation, a question-begging assumption from the
perspective of Bayesian decision theory, since this would require a
preference derivation for \textit{every} kind of probability. Our completely
preference-based approach likens it to Savage's (1954) \cite{Sava54} despite
the technical dissimilarities concerning the set of states and set of
consequences, as well as an axiomatic difference we now clarify. The
assumptions in Theorems 1 and 2 that certain conditionals are orderings
amount to replacing his postulate P2, i.e., the notorious "sure-thing
principle", by a dominance principle, which is weaker and more generally
accepted.\ However, to make good for this loss, we had to revise his other
postulates, and this was done by the way in which we state the invariance of
the orderings in question.\ Savage's P3 requires that conditional
preferences be invariant across all possible conditioning events, but only
when these preferences compare constant prospects. We require invariance
only for some events, but - crucially - for any comparison of prospects,
whether these are constant or not. The other event-invariance condition of
Savage, P4, has no role to play here, because it serves to order an
unstructured consequence set, while ours\ inherits the order structure of
real numbers.\footnote{%
The difference with earlier results in Bayesian decision theory is also one
of mathematical techniques.\ The proofs of Theorems 1\ and 2 depend on
additive separability arguments; see Mongin and Pivato (2015 \cite{MoPi15},
2016 \cite{MoPi16}), and for the source papers, Debreu (1960) \cite{Debre60}
and Gorman (1968) \cite{Gorm68}.}

The results of this paper may be extended in several directions.\ One of
them is \textit{conditional probability}, and the corresponding definition
of stochastic independence in terms of this concept rather than that of
joint probability.\ Conditional prospects can be introduced into the
preference apparatus with relevant preference axioms.\ Such a variation is
unlikely to make much difference to the conclusions, but it would be judged
preferable by those probability theorists and philosophers who, unlike
Kolmogorov, regard conditional probability as the genuine primitive of the
probability calculus.\footnote{%
This is a common line to take, especially among philosophers of probability;
in connection with stochastic independence, see Fitelson and Hajek (2014) 
\cite{FiHa14}.} Another, conceptually more problematic direction is \textit{%
one-sided stochastic independence}.\ Such a concept appears rarely, if ever,
in probability theory.\ In our preference framework, it is possible to
express the idea that $s$ does not bring any information on $t$ whereas $t$
may bring information on $s$; it is enough to assume that the $s$%
-conditionals are invariant, while not assuming that the $t$-conditionals
are. However, Theorem 2 teaches us in effect that, if the preference
assumptions endow $(s,t)$ with a probabilistic representation, one-sided
informational independence automatically entails mutual informational
independence. This suggests that the issue of one-sided stochastic
independence may be impossible to pursue in the present framework; which
other preference framework would facilitate its investigation is unclear.
Last but not least, stochastic independence has been reconsidered in the
currently active work on multiple (or "imprecise") probabilities, and it
would be an interesting project to connect one or more of the definitions
given to it in this work with a decision-theoretic apparatus; the latter
would of course not be Bayesian in the usual sense. Some steps have been
taken in this direction, but much work still remains to be done.\footnote{%
See in particular Ghirardato (1997) \cite{Ghir97} and Bade (2008) \cite%
{Bade08}. Cozman (2012) \cite{Cozma12} surveys different ways in which
stochastic independence can be redefined when beliefs are represented by
sets of probabilities.}

\bigskip

\textbf{Acknowledgements.}\ Many thanks for conceptual and technical
comments to Lorraine Daston, Marcus Pivato, the audience of a seminar at the
Munich Center for Mathematical Philosophy, and three TARK\ referees.\ The
author also gratefully acknowledges the hospitality of the Max Planck
Institut f\"{u}r Wissenschaftsgeschichte zu\ Berlin when he developed this
project.

\bibliographystyle{eptcs}
\bibliography{si}

\end{document}